\documentclass[journal,twoside]{IEEEtran}
\usepackage{amsmath,amssymb,graphicx,booktabs,array,float,url,xcolor,placeins}
\usepackage{algorithm}
\usepackage{algpseudocode}
\usepackage{tikz}
\usetikzlibrary{arrows.meta,positioning,fit,calc}
\usepackage{cite}
\usepackage[hidelinks]{hyperref}
\newtheorem{proposition}{Proposition}
\newtheorem{lemma}{Lemma}
\newtheorem{theorem}{Theorem}

\newcommand{\TableBodyFont}{\normalsize}
\begin{document}

\title{Probe-Conditioned Memory for Actuator-Deadband-Aware Koopman MPC in Industrial Sealing}

\author{Yue~Wu%
\thanks{Yue Wu is with the School of Automation, Xi'an Jiaotong University, Xi'an 710049, China, and also with Xinjiang Cigarette Factory, Hongyun Honghe Tobacco (Group) Co., Ltd., Urumqi 830000, China. Corresponding author: Yue Wu (e-mail: wuyue0619@stu.xjtu.edu.cn).}}

\markboth{IEEE Transactions on Industrial Informatics,~Vol.~XX, No.~XX, 2026}
{Probe-Conditioned Memory for Actuator-Deadband-Aware Koopman MPC}

\maketitle

\begin{abstract}
Industrial sealing and dispensing cells often reuse a pressure chain, nozzle, substrate path, and vision interface across product recipes. For a narrow bead recipe, however, a calibrated static pressure can remain correct while small corrective moves are absorbed by actuator deadband; delivered pressure changes only after a direction- and history-dependent threshold is crossed. Commissioning is defined here as the target setup and retuning interval after such a recipe change. A physical gluing and dispensing cell provides pressure-to-width calibration, a fixed probing sequence, signal-interface limits, residual scales, and actuator bounds. The controller comparison is then run on an anonymized digital twin calibrated from those measurements. The actuator-deadband-aware Koopman model predictive controller (AK-MPC) initializes from probe-conditioned memory (PCM) that links the pressure setpoint to probe-inferred actuator behavior, a predictor, a controller prior, and a fallback filter. During commissioning, a sixteen-move probe selects a nearby historical case, fits the current pressure-width relation, updates a small local dynamic correction, and supplies a feasible receding-horizon pressure policy. In the main \(1.00\) mm benchmark, where delivered-pressure loss is visible in the probe, AK-MPC reaches 0.0487 mm tracking mean absolute error (MAE) over 60 paired cases; the calibration-only inverse, adaptive proportional-integral, online recursive-least-squares ARX, and probe-fitted ARX controllers range from 0.2492 to 0.3956 mm. This large gap reflects the full constrained Koopman-MPC and online-correction workflow. The isolated PCM contribution is measured by ablation: removing PCM raises the error to 0.0655 mm. In this regime, a short actuator characterization makes historical runs useful before much target data are available.
\end{abstract}

\begin{IEEEkeywords}
Industrial informatics, industrial sealing, actuator deadband, calibrated digital twin, probe-conditioned memory, actuator-deadband-aware Koopman MPC, fast commissioning.
\end{IEEEkeywords}

\section{Introduction}

Industrial sealing, gluing, and dispensing lines convert a pressure or extrusion command into a geometric quality variable such as seal width or glue-line width. A pressure actuator imposes material flow, substrate and tooling conditions shape the deposited bead, an inspection device measures width, and the controller updates the next pressure move. Vision-guided sealant dispensing and force/vision monitoring have made the sealing signal path increasingly measurable \cite{hanh2022sealantvisualguidance,pereira2023sealingmonitoring}. Closed-loop gluing and robotic gluing digital twins further show how measurement and control can be integrated around the deposited bead \cite{prezas2024gluingclosedloop,kang2025roboticgluingdigitaltwin}. Human-in-the-loop commissioning studies and recent sealing data reviews indicate that deployment still depends on recipe-specific setup and production records \cite{taets2025gluedispensingcommissioning,sarti2026roboticsealingreview}. Here, commissioning denotes the recipe-specific controller setup and retuning interval after a recipe change. The harder deployment problem is controller initialization during that interval, when the static bead map is only one part of the new operating regime and target data remain scarce.

Low-width recipes expose a practical actuator nonlinearity. In proportional pneumatic valves, spool overlap can create a dead zone in which a range of command positions gives little or no flow \cite{valdiero2008deadzone}. Compact pneumatic platforms and bipolar-pressure systems further show that inflation/deflation, pressure polarity, and valve switching are part of practical pressure-actuation dynamics \cite{tian2023openpneu,mei2026bipneu}. Hysteresis compensation studies show that direction reversals and previous commands must be modeled when the actuator has memory \cite{park2024tcnhysteresis,shen2025fprc}.

The practical difficulty is not only model accuracy, but the timing of adaptation. Every product recipe specifies a target bead width, and calibration gives the pressure expected to deliver it. A move from a nominal wider recipe to a narrow one can also alter delivered-pressure loss, sensor bias, lag, hysteresis, and drift. Static inverse control keeps the calibrated setpoint but ignores the pressure that fails to reach the bead. Autoregressive or recursive-least-squares identification can learn the new dynamics, although it spends scarce target samples rediscovering actuator behavior already present in similar recipes. Black-box sequence models fit historical trajectories, yet the pressure bounds, setpoints, and run records that a controller needs at deployment are usually not preserved in a usable form. The missing object is probe-conditioned memory (PCM): a record that combines static calibration with a short actuator probe.

Model predictive control (MPC) is a natural execution layer when pressure, rate, and quality constraints matter \cite{rawlings2017,qin2003}. Online learning and scheduled MPC reduce mismatch during operation, but their initialization still depends on data gathered after the recipe changes \cite{zheng2022onlinelearningdriftmpc,hu2022scheduledmodempc}. Sparse adaptive modeling and online linearization address changing nonlinear dynamics once observations are available \cite{abdullah2023adaptivesparsempc,luo2023onlinelinearizationmpc}. Switched nonlinear predictive control and recurrent predictive models expand the online operating range \cite{hu2024switchednonlinearmpc,alhajeri2022structuredrnnmpc}. Koopman lifting provides compact nonlinear predictors for constrained control \cite{williams2015,korda2018}, and hybrid or multi-model variants improve representation across regimes \cite{son2022hybridkoopmanmpc,lawrynczuk2024koopmanmultimodel}. Experimental Koopman predictive control has also been demonstrated in process-control loops \cite{valabek2026deepkoopmanpasteurization}. Case-based reasoning and transfer learning supply general reuse mechanisms across tasks \cite{aamodt1994,pan2010}. Digital twins and dynamic transfer soft sensors carry process information across operating conditions \cite{zhu2022digitaltwinquality,zhang2023dynamictransfersoftsensor}. Transfer and federated predictive control extend this idea to nonlinear process settings \cite{xiao2023transfermpc,xu2024federatedmpc}. In the low-width sealing case, reuse must carry an actuator-delivery state into the first commissioning run, not just a predictor checkpoint.

The main comparison uses a narrow, one-millimeter target bead width because this setpoint is feasible from static calibration alone but sensitive to pressure-delivery loss. A calibration-based inverse controller, a calibration-based adaptive proportional-integral (PI) controller, a probe-identified autoregressive (ARX) dynamic controller, and an online recursive-least-squares (RLS) identification-based controller serve as calibration-aware, feedback, and dynamic-identification baselines. The probe-fitted ARX arm is included specifically as a probe-information/no-memory baseline; it uses the target probe to identify local dynamics but does not retrieve a stored PCM record.

The actuator-deadband-aware Koopman MPC starts from such a record. Probe-conditioned memory stores the calibrated pressure setpoint, material and probe response, actuator-delivery estimate, predictor state, and controller warm start as a single context; Section~III gives the tuple. The controller injects the target pressure-width fit, updates a local dynamic correction, and filters pressure moves through receding-horizon constraints. The analysis explains why a calibrated pressure root is not a sufficient state in the deadband regime and relates probe error, retrieval distance, local correction error, and finite-horizon tracking cost. The evaluation uses a calibrated one-millimeter benchmark with twelve target conditions, five random seeds, paired baselines, mechanism ablations, and safety/runtime summaries.

\section{Plant-Calibrated Sealing Digital Twin and Problem Formulation}

The physical gluing/dispensing cell in Fig.~\ref{fig:physical-unit} fixes the variables used throughout the paper. The manipulated variable is the pressure increment $u_k=\Delta p_k$, implemented through a pneumatic pressure/valve actuator. The process pressure $p_k$ drives the glue applicator nozzle, and the quality variable is the coating width $w_k$ measured by a top-view vision camera. The same signal path supplies calibration data, probing responses, actuator limits, and residual statistics. The digital-twin state $x_k=[p_k,w_k]^\top$, the pressure-root target, and the online adaptation buffer remain tied to measured actuator and vision signals.

\begin{figure*}[!t]
\centering
\includegraphics[width=0.95\textwidth]{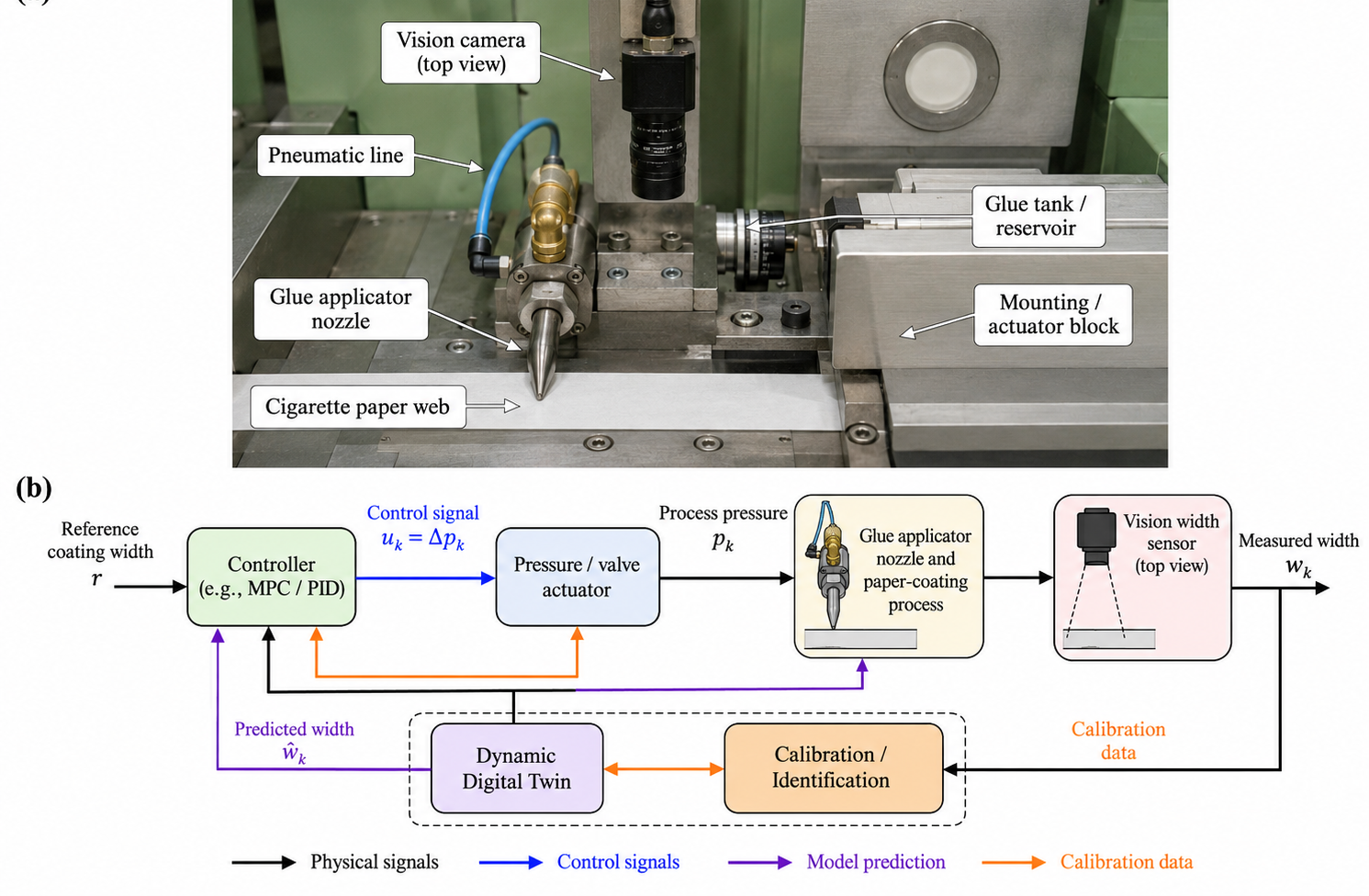}
\caption{Physical gluing/dispensing cell and surrogate-control signal flow. (a) Physical test cell with pneumatic pressure actuation, glue applicator nozzle, paper web, top-view vision camera, glue reservoir, and mounting/actuator block. (b) Signal flow used in the plant-calibrated control study. The physical cell supplies calibration, probe, actuator-limit, and measurement-residual data; the anonymized dynamic twin then executes the paired controller comparison.}
\label{fig:physical-unit}
\end{figure*}

Three physical measurements set the scale of the study. A pressure-width calibration experiment fits the low-width recipe map and pressure root. Signal-interface tests record pressure/rate limits and measurement residuals. A fixed 16-action probing experiment gives the short transient response used to estimate deadband, delivered-pressure gain, lag, and hysteresis. These measurements define the calibration envelope for an anonymized dynamic twin; the paired controller comparison is a calibrated digital-twin study.

The pressure-width calibration used for the displayed stress recipe is monotone and saturating over the tested pressure range (Fig.~\ref{fig:physical-calibration}). For each recipe \(b\), the fitted calibration supplies a recipe pressure root \(p_b^r\) from \(\hat f_b(p_b^r)=r_b\). The displayed low-width stress recipe uses \(r_b=1.00\) mm and gives \(p_b^r=0.467\) bar for \(w(p)=\alpha-\beta e^{-\gamma p}\), with \(\alpha=1.62\), \(\beta=1.22\), and \(\gamma=1.45\). Historical nominal contexts and auxiliary recipe runs may use other references, including \(r_b=1.50\) mm. The calibration residual root mean square error (RMSE) is \(0.010\) mm and enters the prediction tube and actuator-deadband benchmark as a static uncertainty scale. The fixed 16-action probe supplies a short target-side dynamic, material, and actuator descriptor.

\begin{figure}[!t]
\centering
\includegraphics[width=\linewidth]{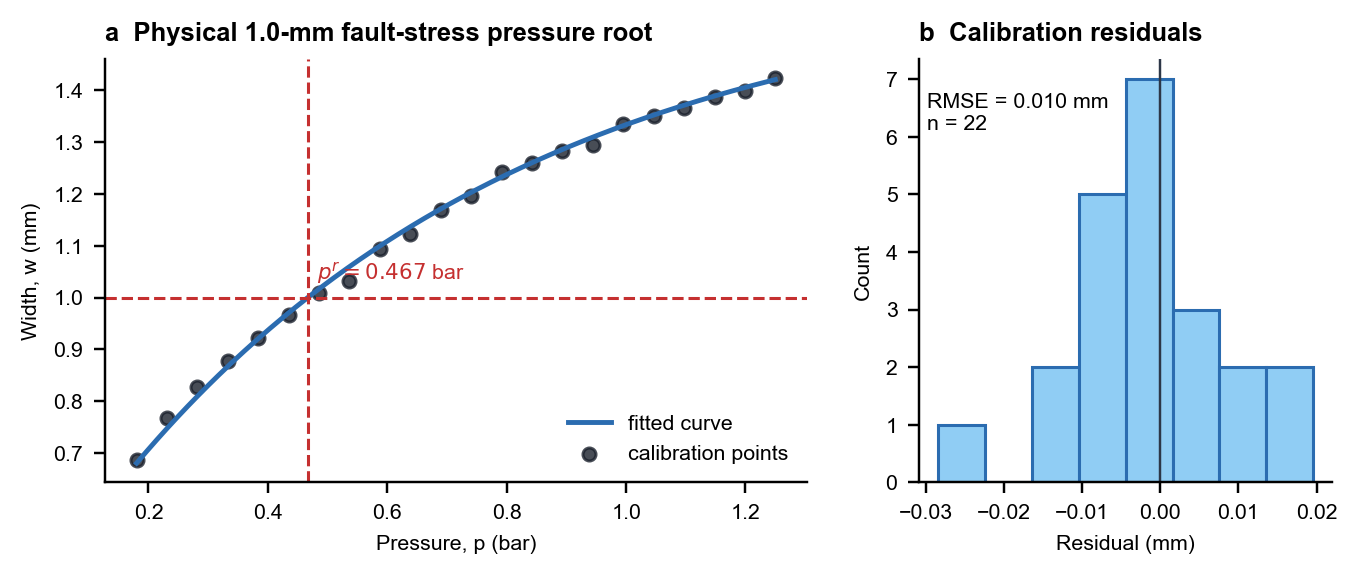}
\caption{Physical pressure-width calibration for the displayed low-width recipe. The fitted curve supplies the \(r_b=1.00\) mm pressure root used by static inverse MPC, PI, ARX-MPC, and AK-MPC. Its residual scale enters the calibration-error term in the benchmark model.}
\label{fig:physical-calibration}
\end{figure}

\begin{table*}[!t]
\centering
\caption{Physical calibration and benchmark interface used in the actuator-deadband stress study.}
\label{tab:physical-diagnostics}
\TableBodyFont
\begin{tabular}{>{\raggedright\arraybackslash}p{0.23\linewidth}>{\raggedright\arraybackslash}p{0.20\linewidth}>{\raggedright\arraybackslash}p{0.47\linewidth}}
\toprule
Quantity & Value & Use in the study \\
\midrule
Stress-recipe width & 1.00 mm & low-width reference \(r_b\) used for the main benchmark \\
Pressure root & 0.467 bar & \(p_b^r\) computed from \(w(p)=\alpha-\beta e^{-\gamma p}=r_b\) \\
Calibration model & $\alpha=1.62$, $\beta=1.22$, $\gamma=1.45$ & saturating pressure-width fit for the physical unit \\
Physical calibration residual RMSE & 0.010 mm & pressure-width fit residual; kept separate from the target-fit residual used in the surrogate cases \\
Calibration samples & 22 & samples used for the displayed residual histogram \\
Signal-interface and probe tests & private traces, public summaries & pressure actuation, width sensing, actuator limits, delivered-pressure loss, and short-probe execution \\
Production and recipe records & confidential raw records & anonymized material, drift, and actuator-response envelopes for the released benchmark \\
Recipe reference in main run & \(r_b=1.00\) mm & setpoint supplied to all controller arms \\
Benchmark pressure domain & $[0.18,1.25]$ bar & actuator pressure envelope used by all controller arms \\
Reporting band & $\tau_w=0.035$ mm & width-error band used for out-of-band reporting \\
\bottomrule
\end{tabular}
\end{table*}

Table~\ref{tab:physical-diagnostics} fixes the physical scale used by the surrogate comparison: the pressure root, calibration residual, admissible pressure range, and reporting band. Raw production records remain internal to the plant; the released benchmark uses anonymized material, drift, and actuator-response envelopes derived from that operating range.

\subsection{Physical-Cell Calibration and Probe Experiments}

Physical-cell measurements provide pressure actuation, top-view width measurement, recipe-specific calibration, actuator/rate limits, measurement residuals, and short-probe responses. The anonymized surrogate then runs the multi-controller stress tests under fixed seeds, target definitions, and shared constraints. Raw industrial records are kept confidential because they contain product recipes, material specifications, production settings, maintenance logs, and operation traces. The public benchmark exposes calibrated maps, residual scales, actuator limits, lag, hysteresis, drift, and actuator-response ranges without releasing the original plant logs.

The benchmark couples a pressure-actuated surrogate process with a probe-conditioned memory layer (Fig.~\ref{fig:system}). In the process path, actuator limits and rate limits shape the pressure increment, the calibrated plant map converts pressure history to width, and an inspection model returns measured width. In the memory path, a short target probe forms a static-dynamic query, the closest stored context initializes prediction and MPC priors, and target observations update the local dynamic correction and run record.

\begin{figure}[!t]
\centering
\includegraphics[width=0.9\linewidth]{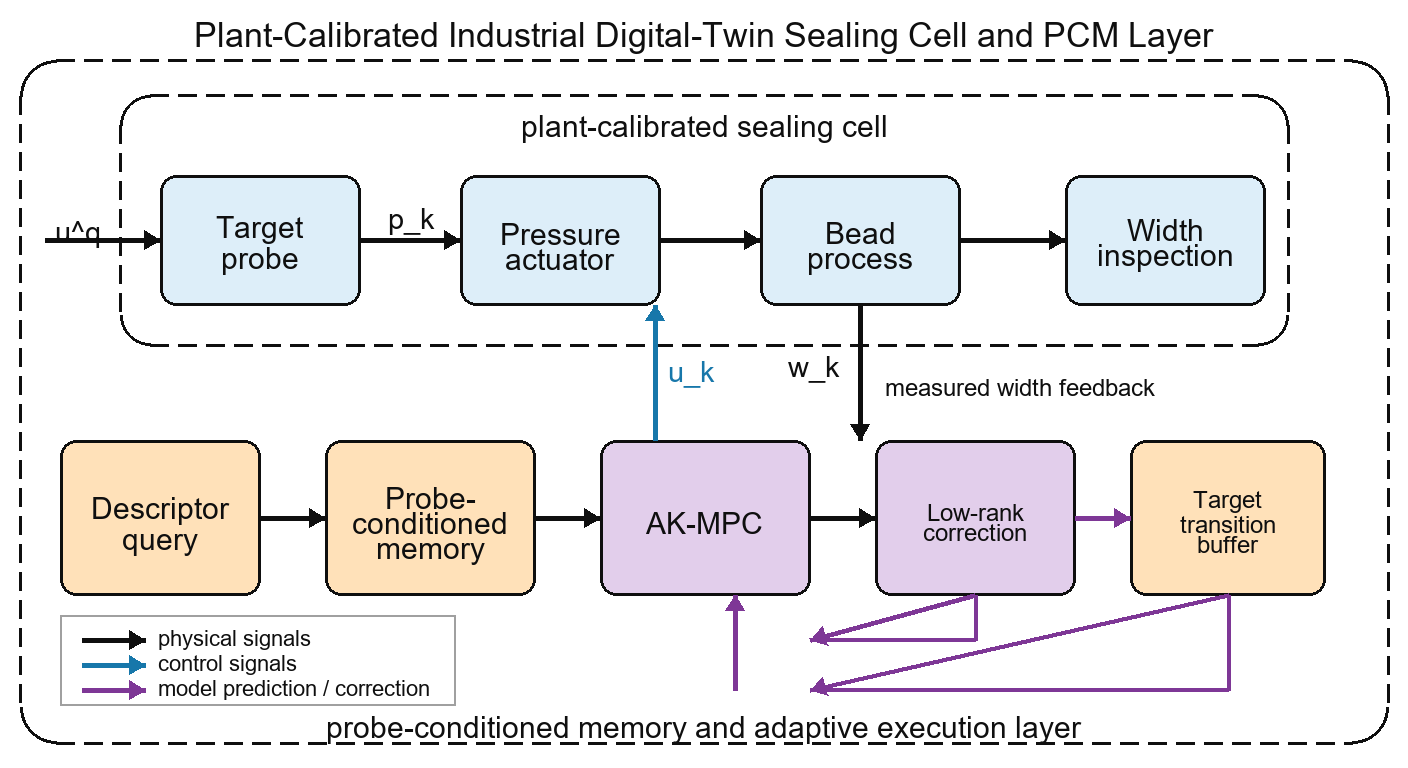}
\caption{Plant-calibrated sealing surrogate and probe-conditioned memory layer used for context retrieval, adaptive prediction, and feasible MPC execution.}
\label{fig:system}
\end{figure}

The controller uses the variables in Table~\ref{tab:variables} throughout the calibration, probing, and actuator-deadband stress runs.

\begin{table}[!t]
\centering
\caption{Industrial sealing variables and their control roles.}
\label{tab:variables}
\TableBodyFont
\begin{tabular}{@{}>{\raggedright\arraybackslash}p{0.18\linewidth}>{\raggedright\arraybackslash}p{0.24\linewidth}>{\raggedright\arraybackslash}p{0.46\linewidth}@{}}
\toprule
Variable & Role & Digital-twin implementation \\
\midrule
Recipe \(b\) & Product or brand condition & selects the target reference, material response, and admissible commissioning record \\
Reference \(r_b\) & Recipe quality target & \(1.00\) mm for the main low-width stress recipe; other recipe references are auxiliary support runs \\
Pressure root \(p_b^r\) & Static initialization & root of \(\hat f_b(p_b^r)=r_b\) used by static inverse, MPC terminal pressure, and fallback logic \\
Material descriptor \(d_b^m\) & Material state & measurable vector from calibration and probe response, kept separate from \(r_b\) and \(p_b^r\) \\
Pressure \(p_k\) & Actuator or command state & benchmark bounded to \( [0.18,1.25] \) bar in the main stress recipe \\
Action \(\Delta p_k\) & Controller move & rate-limited to \( [-0.055,0.055] \) bar per step in the benchmark \\
Width \(w_k\) & Quality output & physical coating width or scaled benchmark seal width, both measured with noise \\
Probe sequence & Target characterization & fixed 16 pressure moves used to form dynamic and actuator-response descriptors \\
Context record & Stored operating case & pressure root, material/probe response, actuator descriptor, predictor, scaler, MPC prior, fallback state \\
\bottomrule
\end{tabular}
\end{table}

The material descriptor \(d_b^m\) has seven entries derived from calibration and probe data. Calibration contributes the slope at \(p_b^r\), curvature near the pressure root, saturation margin, and physical calibration RMSE from Fig.~\ref{fig:physical-calibration} and Table~\ref{tab:physical-diagnostics}. The probe contributes transient gain, lag index, and hysteresis index. The static descriptor \(d_b^s\) stores recipe reference, pressure root, and coarse static-map position; \(d_b^m\) stores derivative, curvature, saturation, uncertainty, and transient material-response quantities. Keeping these roles separate prevents the retrieval metric from counting the same pressure-root information twice.

The physical calibration in Fig.~\ref{fig:physical-calibration} is displayed with a saturating exponential fit because that form is interpretable for the measured rig over the tested pressure range. The controller and the finite Koopman closure below use a local polynomial surrogate on the compact operating interval; the polynomial approximation residual is carried as the calibration term $\epsilon_f$ in the prediction tube. The measured pressure root and residual scale enter the controller separately from the algebraic dictionary used for tube filtering.

The dynamic surrogate used for the paired controller benchmark is
\begin{equation}
w_{k+1}=(1-\ell_\theta) f_{b,\theta,g}(p_{k+1}^{eff})+\ell_\theta w_k+h_\theta \Delta p_k+b_{\theta,e}+b_\theta^{sens}+\varepsilon_k,
\end{equation}
where \(f_{b,\theta,g}\) is the recipe-specific pressure-width map after curve-warp or local restriction effects, \(p_{k+1}^{eff}\) is the effective pressure after actuator loss or deadband, \(\ell_\theta\) is first-order width memory, \(h_\theta\) is action-dependent hysteresis, \(b_{\theta,e}\) is an episode-level drift term, \(b_\theta^{sens}\) is a sensor-bias term, and \(\varepsilon_k\) is inspection noise. The model keeps the transient terms explicit: two recipes can share similar pressure-width curves while exhibiting different delivered-pressure responses, and two targets within a recipe can share \(r_b\) while differing in material and actuator behavior.

The implementation represents delivered pressure through the deadband-gain map
\begin{align}
\mathcal D_\theta(u_k)&=\operatorname{sgn}(u_k)g_\theta[|u_k|-d_\theta]_+,\\
p_{k+1}^{eff}&=\Pi_{[p_{\min},p_{\max}]}\{p_k+\mathcal D_\theta(u_k)\},
\end{align}
with \(d_\theta\ge 0\), \(g_\theta\in(0,1]\), and \([s]_+=\max\{s,0\}\). The nonlinear runner uses the same structure with a saturating transient \(h_\theta\tanh(u_k/6)\); the term \(h_\theta\Delta p_k\) in the plant equation above is its local surrogate in the lifted analysis. Asymmetric positive/negative deadbands can be handled by replacing \(d_\theta,g_\theta\) with \(d_\theta^\pm,g_\theta^\pm\). Probe-conditioned memory lookup uses the actuator descriptor
\begin{equation}
z_\theta^a=[d_\theta,\; g_\theta,\; \ell_\theta,\; h_\theta]^\top ,
\end{equation}
with calibration residual, measurement noise, and slow drift carried in \(\rho_i\) and in the tube terms below.

These terms are tied to measured or inferred quantities. The pressure-width term \(f_{b,\theta,g}\) is anchored by noisy calibration samples, material descriptors, and the target response mode. The transient coefficients \(\ell_\theta\) and \(h_\theta\) come from probe response and pressure-reversal behavior. The variables \(b_{\theta,e}\), \(b_\theta^{sens}\), and \(p^{eff}\) describe drift, sensor bias, and delivered pressure after loss or deadband; \(\varepsilon_k\) represents inspection noise; and \(\rho_i\) records fit residuals, retrieval distance, actuator-mode labels, and run history. Here \(\theta\) indexes a target condition, while the scalar deadband \(d_\theta\) is distinct from the descriptor symbols \(d_i^s\), \(d_i^m\), and \(d_i^q\).

\begin{proposition}[Root-only indistinguishability under actuator deadband]
Consider two targets \(\theta_1,\theta_2\) with the same calibrated pressure-width map \(f_b\) and pressure root \(f_b(p_b^r)=r_b\), but with delivered-pressure maps \(g_{a,\theta_j}(p,u)\), \(j=1,2\). Let a controller choose its first pressure move using only \((\hat f_b,p_b^r)\) and the same width history, before any actuator-specific target response to that move has been observed. If \(g_{a,\theta_1}(p_k,u_k)\ne g_{a,\theta_2}(p_k,u_k)\), \(|f'_b(p)|\ge m_f>0\) on the compact operating interval, \(\ell_\theta\le\bar\ell<1\), and the combined residual difference is bounded by \(2\bar\varepsilon\), then the next-width separation satisfies
\begin{equation}
|w_{k+1}^{(1)}-w_{k+1}^{(2)}|
\ge (1-\bar\ell)m_f |g_{a,\theta_1}(p_k,u_k)-g_{a,\theta_2}(p_k,u_k)|-2\bar\varepsilon .
\end{equation}
\emph{Proof.} The information available to a root-only initializer is identical for the two targets, so it applies the same \(u_k\). Subtracting the two width updates leaves the difference between \(f_b(g_{a,\theta_1}(p_k,u_k))\) and \(f_b(g_{a,\theta_2}(p_k,u_k))\), plus bounded residuals. The mean-value theorem and the lower slope bound give the stated inequality.
\end{proposition}

\begin{proposition}[Deadband lower bound for small pressure moves]
For the delivered-pressure map \(\mathcal D_{\theta_j}(u)=g_j^+[u-d_j^+]_+\) in the positive direction, if \(d_1^+<u<d_2^+\), then target \(\theta_1\) receives \(g_1^+(u-d_1^+)\) effective pressure while target \(\theta_2\) receives none. Hence
\begin{equation}
|w_{k+1}^{(1)}-w_{k+1}^{(2)}|
\ge (1-\bar\ell)m_f g_1^+(u-d_1^+)-2\bar\varepsilon .
\end{equation}
The static pressure root is necessary but not sufficient in the low-width deadband regime: the missing state is the actuator delivery map, which the fixed probe estimates before feedback commissioning begins.
\end{proposition}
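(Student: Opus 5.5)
The plan is to obtain the statement as a direct specialization of Proposition~1. That proposition already turns any difference in delivered pressure into a lower bound on next-width separation. What remains is to evaluate that difference for the deadband-gain map \(\mathcal D_\theta\) when the move \(u\) sits strictly between the two positive thresholds. Throughout, I take the standing hypotheses of Proposition~1 as given: same calibrated map \(f_b\) and root \(p_b^r\), slope bound \(|f_b'|\ge m_f\), memory bound \(\ell_\theta\le\bar\ell<1\), and combined residual difference at most \(2\bar\varepsilon\).

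The first step is to compute the two delivered increments. Since \(u>d_1^+\), we have \([u-d_1^+]_+=u-d_1^+>0\), so target \(\theta_1\) receives \(\mathcal D_{\theta_1}(u)=g_1^+(u-d_1^+)\). Since \(u<d_2^+\), we have \([u-d_2^+]_+=0\), so \(\mathcal D_{\theta_2}(u)=0\). The delivered pressures are therefore \(p_k+g_1^+(u-d_1^+)\) and \(p_k\), and they differ by exactly \(g_1^+(u-d_1^+)\).

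The second step is to check that this difference survives the projection \(\Pi_{[p_{\min},p_{\max}]}\) in the effective-pressure map, because Proposition~1 is stated in terms of \(g_{a,\theta}(p,u)=p^{eff}_{k+1}\). I would assume the operating point is interior, i.e.\ \(p_k+g_1^+(u-d_1^+)\le p_{\max}\). This is the natural setting of the compact operating interval, and it holds in the benchmark because \(p_b^r=0.467\) bar is far from \(1.25\) bar relative to the \(0.055\) bar rate limit. Under this assumption the projection is the identity on both arguments. Then
\[|g_{a,\theta_1}(p_k,u)-g_{a,\theta_2}(p_k,u)|=g_1^+(u-d_1^+).\]
Substituting this into the inequality of Proposition~1 gives the claimed bound.

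The third step is to keep the other terms of the width update from contributing. Both targets receive the identical move \(u\), since a root-only controller cannot distinguish them, as argued in Proposition~1, and they share the same width history. Hence the \(\ell_\theta w_k\) and \(h_\theta\Delta p_k\) contributions either cancel or are absorbed into the residual budget \(2\bar\varepsilon\), exactly as in that proof. The mean-value theorem on \(f_b\) then gives \(|f_b(p_k+g_1^+(u-d_1^+))-f_b(p_k)|\ge m_f g_1^+(u-d_1^+)\). Multiplying by \((1-\ell_\theta)\ge 1-\bar\ell\) gives the stated inequality.

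The main obstacle is rigor in this last step, since the bound is only as strong as Proposition~1's handling of target-dependent \(\ell_\theta,h_\theta\). If these differ between targets, the cross terms \((\ell_{\theta_1}-\ell_{\theta_2})w_k\) must be assumed to be inside \(2\bar\varepsilon\). Otherwise the theorem has to be stated for targets sharing \(\ell,h\), and I would make that interpretation explicit. The closing interpretive sentence, that the root is necessary but not sufficient, then follows. Whenever \(g_1^+(u-d_1^+)>2\bar\varepsilon/((1-\bar\ell)m_f)\), two targets with identical root information produce distinguishable outcomes. So no root-only rule is correct for both, and the missing information is precisely \((d^+,g^+)\), the deadband and gain entries of \(z^a_\theta\).
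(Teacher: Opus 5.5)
Your proposal is correct and follows the paper's argument. The paper also evaluates the deadband-gain map at \(d_1^+<u<d_2^+\), obtaining delivered increments \(g_1^+(u-d_1^+)\) and \(0\), and substitutes that difference into Proposition~1. The extra conditions you state are hypotheses the paper uses implicitly: an interior pressure, so that \(\Pi_{[p_{\min},p_{\max}]}\) is inactive, and a shared \(\ell,h\), or else absorption of their mismatch into \(2\bar\varepsilon\). Making them explicit tightens the same proof rather than changing it.
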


\section{Probe-Conditioned Memory}

Each stored operating case extends a model checkpoint with the pressure root, material response, short-probe response, actuator descriptor, predictor state, normalization, MPC prior, and fallback filter needed to initialize actuator-deadband-aware Koopman MPC before extensive target data are available.

\subsection{Context Record and Feasible Pressure Moves}

The closed-loop state is $x_k=[p_k,w_k]^\top$, where $p_k$ is pressure and $w_k$ is width. The manipulated input is the pressure increment $u_k=\Delta p_k$. The tube-filtered controller applies the state-dependent admissible input set
\begin{equation}
\mathcal U_p(p_k):=[u_{\min},u_{\max}]\cap[p_{\min}-p_k,p_{\max}-p_k].
\end{equation}
If $u_k\in\mathcal U_p(p_k)$ and $p_k\in[p_{\min},p_{\max}]$, then $p_{k+1}=p_k+u_k\in[p_{\min},p_{\max}]$ without activating a clip.

The fixed probe sequence \(\mathcal U^q=\{u_0^q,\ldots,u_{N_q-1}^q\}\) is used to estimate the actuator part of the target state before the first closed-loop episode. From the probe log, the implementation observes the effective pressure increment \(\Delta p_k^{eff}=p_{k+1}-p_k\) and fits
\begin{equation}
(\hat d,\hat g)=\arg\min_{d,g}\sum_{k\in\mathcal I_q}
\left(\Delta p_k^{eff}-\operatorname{sgn}(u_k^q)g[|u_k^q|-d]_+\right)^2 ,
\end{equation}
where \(d\) is searched on a short grid and \(g\) is obtained by least squares. With \(\hat p_{k+1}^{eff}\) fixed by the logged pressure, lag and transient response are estimated by
\begin{equation}
[\hat\ell,\hat h,\hat c]^\top=\arg\min_\beta\sum_{k\in\mathcal I_q}
\left(y_k-\varphi_k^\top\beta\right)^2 ,
\end{equation}
where \(y_k=w_{k+1}^{obs}-\hat f_b(\hat p_{k+1}^{eff})\) and
\(\varphi_k=[w_k-\hat f_b(\hat p_{k+1}^{eff}),\tanh(u_k^q/6),1]^\top\). The target actuator descriptor used below is \(z_*^a=[\hat d,\hat g,\hat\ell,\hat h]^\top\), while \(\hat c\) and the probe residuals are stored in the run record.

\begin{lemma}[Probe-estimation error under deadband excitation]
Suppose the calibration map is monotone on the operating interval with \(0<m_f\le |\hat f'_b(p)|\), the probe contains both pressure directions and at least one command beyond each relevant deadband by margin \(\Delta_u>0\), and the regression matrices in the two estimates above have smallest singular value bounded below by \(\mu_q>0\). If the pressure/width residual entering the probe fit is bounded by \(\bar e_q\), then for a constant \(C_q\) depending on \(\mu_q\), the probe sequence, and the local slope bound,
\begin{equation}
\|\hat z_*^a-z_*^a\|_2 \le C_q\left(\frac{\bar e_q}{m_f}+\epsilon_f\right).
\end{equation}
\emph{Proof sketch.} Monotonicity converts width residuals to effective-pressure residuals with gain no larger than \(1/m_f\). The deadband-exciting probe keeps the deadband/gain and lag/hysteresis regression matrices away from rank deficiency. Standard least-squares perturbation bounds then give the stated estimate.
\end{lemma}

Each historical operation is stored as
\begin{equation}
\mathcal C_i=(b_i,r_i,p_i^r,\mathcal D_i,d_i^s,d_i^m,d_i^q,z_i^a,\Theta_i^K,S_i,\pi_i,\rho_i),
\end{equation}
where \(b_i\) is the product recipe or brand condition, \(r_i\) is its quality reference, \(p_i^r\) is the stored pressure root, $\mathcal D_i$ contains pressure-width and transition data, $d_i^s$ is the static pressure-width descriptor, \(d_i^m\) is the material-response vector defined above, $d_i^q$ is the probing-response descriptor, \(z_i^a\) stores the observed or inferred actuator-response descriptor, $\Theta_i^K=(\phi_\eta,\psi_\eta,A_i,B_i,c_i)$ is the Koopman prediction state, $S_i$ is the state scaler, $\pi_i$ is the MPC prior, and $\rho_i$ stores fit residuals, retrieval distance, update history, fallback state, and lifecycle information. For a new target recipe \(b_\ast\), a noisy pressure-width calibration gives $\hat f_\ast(p)=\hat a_2p^2+\hat a_1p+\hat a_0$ and the pressure root $p_\ast^r$ solving \(\hat f_\ast(p_\ast^r)=r_\ast\) closest to the operating range center. The calibration supplies \(d_\ast^s\) and \(d_\ast^m\), the target probe supplies \(d_\ast^q\), observable residual/probe statistics supply \(z_\ast^a\), and retrieval uses
\begin{equation}
\delta_i(\ast)=\alpha\frac{\|d_\ast^s-d_i^s\|_2}{m_s+\epsilon_m}
+(1-\alpha)\frac{\|[d_\ast^m,d_\ast^q,z_\ast^a]-[d_i^m,d_i^q,z_i^a]\|_2}{m_q+\epsilon_m}.
\end{equation}
Here \(z^a\) is inferred from calibration and probe data. The descriptor combines four actuator-response quantities: transient gain, lag index, hysteresis asymmetry, and a deadband proxy. Residual statistics, including bias, pressure-efficiency loss when available, variance, and drift slope, are stored with the same target record. Hidden target parameters and true response-mode labels instantiate the surrogate plant and offline record; deployed controllers receive only the inferred descriptor. The closest feasible context supplies $(\Theta_i^K,S_i,\pi_i)$, while large distance, large calibration residual, or high actuator mismatch is recorded in $\rho_\ast$ as a fallback flag. In this form, probe-conditioned memory stores both the predictor and the information needed to execute a short commissioning run.

\begin{theorem}[Retrieved probe-conditioned memory bounds the initial predictor mismatch]
Let \(\xi_i=[d_i^s,d_i^m,d_i^q,z_i^a]^\top\), let \(\hat\xi_*=[d_*^s,d_*^m,d_*^q,\hat z_*^a]^\top\), and let the retrieved context be \(i^*=\arg\min_i\delta_i(*)\). Assume that the lifted one-step predictor is locally Lipschitz in the descriptor, namely
\[
\|F_K(z,u;\xi_1)-F_K(z,u;\xi_2)\|_2\le L_K\|\xi_1-\xi_2\|_2,
\]
and let \(\epsilon_K\) denote finite-dictionary closure and measurement residuals. Then the target predictor initialized from \(i^*\) satisfies
\begin{equation}
\|z_{k+1}-\hat z_{k+1}^{\,i^*}\|_2
\le L_K\bigl(\|\xi_*-\hat\xi_*\|_2+\|\hat\xi_*-\xi_{i^*}\|_2\bigr)+\epsilon_K .
\end{equation}
The initial mismatch has two terms: the probe-estimation error in Lemma~1 and the source-library covering radius around the estimated target descriptor.
\emph{Proof.} Insert and subtract \(F_K(z,u;\hat\xi_*)\) and \(F_K(z,u;\xi_{i^*})\), apply the Lipschitz condition, and add the closure residual. The retrieval term is bounded by the selected nearest-context distance in the retrieval rule above.
\end{theorem}

\section{Actuator-Deadband-Aware Koopman MPC With Feasibility Filtering}

The retrieved context becomes a target controller after four operations: inject the target recipe, pressure-width map, and pressure root; adapt the predictor with a low-dimensional local correction; search feasible receding-horizon pressure moves; and record when the learned candidate is accepted or replaced by the fallback filter.

\begin{algorithm}[!t]
\caption{Actuator-deadband-aware Koopman MPC initialized from probe-conditioned memory}
\label{alg:akmpc}
\footnotesize
\begin{algorithmic}[1]
\Require Probe-conditioned memory $\mathcal M=\{\mathcal C_i\}_{i=1}^{N_c}$; target recipe \(b_*\); recipe reference \(r_*\); target calibration pairs $\mathcal D_*^{\rm cal}$; material descriptor \(d_*^m\); fixed probe sequence $\mathcal U^q$; pressure and input bounds; horizon $H$; local-correction setting.
\Ensure Closed-loop pressure increments $u_k$ and updated commissioning record $\rho_*$.
\State Fit $\hat f_*(p)$ from $\mathcal D_*^{\rm cal}$, compute $\epsilon_f$, and find the feasible pressure root \(p_*^r\) satisfying \(\hat f_*(p_*^r)=r_*\).
\State Form $d_*^s$ from $\hat f_*$, $p_*^r$, and residual statistics; update \(d_*^m\) from curve-shape and material cues when it is not supplied; execute $\mathcal U^q$ and form $d_*^q$ and \(z_*^a\) from lag, transient gain, hysteresis, bias, deadband, efficiency, and residual statistics.
\State Retrieve $i^*=\arg\min_i\delta_i(*)$; initialize $(\Theta^K_{i^*},S_{i^*},\pi_{i^*},p_*^r)$, probe-conditioned prior, buffer $\mathcal B_*$, local correction, and run record $\rho_*$.
\For{closed-loop time $k=0,1,\ldots$}
    \State Measure $x_k=[p_k,w_k]^\top$ and append the latest transition to $\mathcal B_*$ when available.
    \If{online target updating is enabled and a new transition is available}
        \State Update the local correction used by the target predictor; log residuals and correction status in $\rho_*$.
    \Else
        \State Keep the previous correction and log frozen or low-information status.
    \EndIf
    \State Generate candidate sequences: shifted previous plan, zero move, constant moves, proportional width correction, pressure-root first-step/ramp moves, and sampled feasible perturbations.
    \State Roll out candidates with target-map injection, actuator-aware pressure delivery, and the corrected predictor; discard candidates whose predicted pressure leaves the admissible interval.
    \State Score finite candidates using width tracking, input effort, input smoothness, and enabled terminal penalties.
    \If{at least one feasible candidate exists}
        \State Apply the first move of the lowest-cost feasible candidate; optionally update the best-candidate record in $\pi_*$ for warm-starting.
    \Else
        \State Apply the bounded pressure-root or static inverse adaptive-PI fallback filtered through $\mathcal U_p(p_k)$.
    \EndIf
  \State Update the commissioning record $\rho_*$ with retrieval, residual, fallback-filter, and constraint information.
\EndFor
\end{algorithmic}
\end{algorithm}

\subsection{Feasibility Filtering and Mechanism Analysis}

The feasibility layer serves as a fallback filter for the learned controller. It records candidate rejection, input saturation, fallback activation, and prediction-tube width while keeping pressure and input constraints hard. Because the main evaluation is performed on a calibrated surrogate, the controller must still tolerate a residual simulation-to-reality gap. The online local correction and fallback filter are designed to absorb unmodeled physical mismatch, material drift, and calibration residuals when the controller is transferred from a stored context to a commissioning run. Before measurement noise is added, the plant-calibrated digital twin has the scalar structure
\begin{align}
p_{k+1}&=p_k+u_k,\\
w_{k+1}&=(1-\ell)f(p_k+u_k)+\ell w_k+h u_k+b_k+\varepsilon_k .
\end{align}
With a local quadratic pressure-width surrogate \(q(p)=a_2p^2+a_1p+a_0\), the dictionary
\begin{equation}
\phi(x_k)=[1,p_k,w_k,p_k^2]^\top
\end{equation}
is closed by an input-augmented finite dictionary.

\begin{proposition}[Finite lifted closure for the local pressure-width surrogate]
Consider \(p_{k+1}=p_k+u_k\) and
\[
w_{k+1}=(1-\ell)q(p_k+u_k)+\ell w_k+h u_k+\bar b+\epsilon_k,
\]
where \(q(p)=a_2p^2+a_1p+a_0\). With
\[
\psi_k=[1,p_k,w_k,p_k^2,u_k,p_ku_k,u_k^2]^\top,
\]
there exists a finite matrix \(M_\theta\), depending on \(a_0,a_1,a_2,\ell,h,\bar b\), such that \(\phi(x_{k+1})=M_\theta\psi_k+r_{k+1}\). The residual \(r_{k+1}\) collects the difference between the local polynomial and the calibrated pressure-width map, drift variation \(b_k-\bar b\), inspection noise, and unmodeled hysteresis or actuation effects.
\end{proposition}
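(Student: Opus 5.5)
The plan is to build \(M_\theta\) explicitly, one row for each entry of \(\phi(x_{k+1})=[1,p_{k+1},w_{k+1},p_{k+1}^2]^\top\). Each entry will be written as a linear combination of the entries of \(\psi_k=[1,p_k,w_k,p_k^2,u_k,p_ku_k,u_k^2]^\top\). The key point is that the pressure recursion \(p_{k+1}=p_k+u_k\) is affine and the surrogate \(q\) is quadratic. Hence every quantity that appears has degree at most two in \((p_k,u_k)\) and at most one in \(w_k\). That is exactly the monomial set spanned by \(\psi_k\).

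The first two rows are immediate. The constant gives row \(e_1^\top\), since \(1=\psi_k^{(1)}\). The pressure update gives \(p_{k+1}=\psi_k^{(2)}+\psi_k^{(5)}\). For the fourth entry, expand \(p_{k+1}^2=p_k^2+2p_ku_k+u_k^2\), which gives the row \([0,0,0,1,0,2,1]\).

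For the width row, expand
\[
q(p_k+u_k)=a_2p_k^2+2a_2p_ku_k+a_2u_k^2+a_1p_k+a_1u_k+a_0 .
\]
Substituting this into the noise-free width map gives
\[
w_{k+1}=(1-\ell)a_0+\bar b+(1-\ell)a_1p_k+\ell w_k+(1-\ell)a_2p_k^2+\bigl((1-\ell)a_1+h\bigr)u_k+2(1-\ell)a_2p_ku_k+(1-\ell)a_2u_k^2+\epsilon_k .
\]
The coefficient vector multiplying \(\psi_k\) is therefore
\[
\bigl[(1-\ell)a_0+\bar b,\,(1-\ell)a_1,\,\ell,\,(1-\ell)a_2,\,(1-\ell)a_1+h,\,2(1-\ell)a_2,\,(1-\ell)a_2\bigr].
\]
These entries depend only on \(a_0,a_1,a_2,\ell,h,\bar b\), as the statement requires. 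Stacking the four rows gives a \(4\times 7\) matrix \(M_\theta\).

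The last step is to define the residual. I set \(r_{k+1}=\phi(x_{k+1})-M_\theta\psi_k\) along the actual calibrated trajectory. Its only nonzero component is the width entry, which equals \((1-\ell)\bigl(f(p_k+u_k)-q(p_k+u_k)\bigr)+(b_k-\bar b)+\varepsilon_k\). If one uses the nonlinear runner, this entry also picks up the hysteresis mismatch \(h(\tanh(u_k/6)\cdot\text{scale}-u_k)\) and the deadband or clipping discrepancy \(p^{eff}_{k+1}-(p_k+u_k)\) pushed through \(f\). This decomposition matches the statement's list of residual sources.

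There is little real obstacle here, since the construction is exact polynomial bookkeeping. The only care needed is in the actuation terms. The recursion \(p_{k+1}=p_k+u_k\) holds without clipping only when \(u_k\in\mathcal U_p(p_k)\), which is the feasibility filter from Section III. When deadband is present, the difference \(p^{eff}\) versus \(p_k+u_k\) must be routed into \(r_{k+1}\) rather than into \(M_\theta\). I would state explicitly that the pressure rows close exactly under the feasibility filter, and that all actuation mismatch is absorbed into the width residual.
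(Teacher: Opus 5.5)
Your proposal is correct and follows the paper's argument. The paper simply notes that \(p_{k+1}\), \(p_{k+1}^2=p_k^2+2p_ku_k+u_k^2\), and the quadratic expansion of \(q(p_k+u_k)\) are all linear in \(\psi_k\), leaving only the map-approximation and nonstationary terms in \(r_{k+1}\); you make the same bookkeeping explicit by writing out the four rows of the \(4\times 7\) matrix \(M_\theta\), and your width-row coefficients are right.
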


Indeed, \(p_{k+1}\), \(p_{k+1}^2=p_k^2+2p_ku_k+u_k^2\), and the quadratic expansion of \(q(p_k+u_k)\) are all linear in \(\psi_k\); only the calibrated-map approximation and nonstationary physical terms remain outside the finite closure. The lifted predictor is built around the physical pressure-width interface instead of a generic sequence model.

For target transfer, the retrieved source operator is corrected as
\begin{equation}
z_{k+1}=(A_i+UV^\top)z_k+(B_i+\Delta B)u_k+c_i+\Delta c .
\end{equation}
With a transition buffer \(\mathcal B_k=\{(z_t,u_t,z_{t+1})\}_{t=k-M}^{k-1}\), define the retrieved-source residuals
\[
R_t=z_{t+1}-A_i z_t-B_i u_t-c_i,\qquad
\Omega=[Z_-^\top,\;U_-^\top,\;\mathbf 1]^\top .
\]
The regularized finite-rank update used by the controller can be written as
\begin{align}
\Delta\Theta_{\rm ridge}&=R\Omega^\top(\Omega\Omega^\top+\lambda_\Delta I)^{-1},\\
\widehat{\Delta\Theta}_{r_\Delta}&=\mathcal T_{r_\Delta}(\Delta\Theta_{\rm ridge}),
\end{align}
where \(\mathcal T_{r_\Delta}\) denotes rank-\(r_\Delta\) truncation and \(\widehat{\Delta\Theta}_{r_\Delta}=[\Delta A,\Delta B,\Delta c]\). The low-rank form reflects the sealing structure. The pressure update is the same kinematic row \(p_{k+1}=p_k+u_k\) across contexts, and the pressure-root injection is handled by the static calibration. Source-target mismatch enters mainly through the rows that map lifted pressure and input terms into \(w_{k+1}\): local gain, lag, hysteresis, drift, and bias alter the width dynamics without requiring a dense rewrite of the pressure state. A row-sparse change in these channels has low numerical rank in the lifted operator, motivating the structured correction.

For a candidate sequence \(\mathbf u=(u_{k|k},\ldots,u_{k+H-1|k})\), the controller rolls out the corrected predictor and scores
\begin{align}
J_k(\mathbf u)=&
\sum_{h=1}^{H}\!\left[q_w(\hat w_{k+h|k}-r_*)^2
+q_p(\hat p_{k+h|k}-p_*^r)^2\right]\nonumber\\
&+\sum_{h=0}^{H-1}\!\left(r_u u_{k+h|k}^2
+r_{\Delta u}\Delta u_{k+h|k}^2\right),
\end{align}
where \(\Delta u_{k+h|k}=u_{k+h|k}-u_{k+h-1|k}\). The candidates are subject to \(u_{k+h|k}\in\mathcal U_p(\hat p_{k+h|k})\) and \(\hat p_{k+h|k}\in[p_{\min},p_{\max}]\). The same cost and constraints score the shifted feasible plan, pressure-root pull, PI correction, zero move, sampled perturbation, and terminal fallback candidates in Algorithm~\ref{alg:akmpc}.

\begin{lemma}[Prediction tube under probe-conditioned memory error]
Assume the one-step width predictor satisfies
\[
|w_{k+1}-\hat w_{k+1|k}|\le \bar\ell |w_k-\hat w_{k|k}|+\bar\delta ,
\]
where
\begin{equation}
\bar\delta=\epsilon_f+\epsilon_y+\epsilon_b+L_a\|z_a-\hat z_a\|_2+L_\Delta\|\Delta\Theta-\widehat{\Delta\Theta}_{r_\Delta}\|_F .
\end{equation}
Then the \(h\)-step width error is bounded by \(E_h\) with
\begin{equation}
E_{h+1}=\bar\ell E_h+\bar\delta,\quad E_0=0,\quad
E_h\le \frac{1-\bar\ell^h}{1-\bar\ell}\bar\delta .
\end{equation}
\emph{Proof.} Apply the one-step inequality recursively. The probe-conditioned memory term enters only through \(\|z_a-\hat z_a\|_2\), which is reduced by the probe estimate and retrieval bound above.
\end{lemma}

The tube-filtered variant uses \(E_h\) as a scalar width-error tube. Pressure and input constraints remain hard. The width tube is used as a hard candidate screen only on horizons for which the tube margin is nonempty. Specifically, for \(E_h<\tau_w\), filtered-mode candidates must satisfy
\begin{equation}
\hat w_{k+h|k}\in[r_*-\tau_w+E_h,\ r_*+\tau_w-E_h].
\end{equation}
When \(E_h\ge\tau_w\), this interval is empty and the hard width screen is skipped. The implementation records a tube-overrun condition, ranks pressure/input-feasible candidates by the slack
\begin{equation}
s_h=\left[\,|\hat w_{k+h|k}-r_*|-(\tau_w-E_h)\,\right]_+,
\end{equation}
and applies the pressure-root or static inverse adaptive-PI fallback when the learned candidate does not reduce the slack.

\begin{theorem}[Finite-horizon tracking-cost degradation]
If \(|w_{k+h}-\hat w_{k+h|k}|\le E_h\) along a feasible candidate sequence, then the true tracking part of the finite-horizon cost satisfies
\begin{equation}
J_k^{true}(\mathbf u)\le J_k^{pred}(\mathbf u)+
\sum_{h=1}^{H}q_w\left(2|\hat w_{k+h|k}-r_*|E_h+E_h^2\right),
\end{equation}
with the same pressure-root and input penalties in both costs.
\emph{Proof.} Write \(w_{k+h}-r_*=(\hat w_{k+h|k}-r_*)+(w_{k+h}-\hat w_{k+h|k})\), expand the square, and use the bound \(E_h\).
\end{theorem}

In the experiments, prediction MAE estimates realized model error, rejection and fallback rates show how often the filter intervenes, input saturation and safety rows report pressure/input-limit stress, and paired tracking tables show whether smaller prediction and adapter errors become finite-horizon tracking gains. The no-memory and frozen-memory rows perturb the two explicit terms \(\|z_a-\hat z_a\|_2\) and \(\|\Delta\Theta-\widehat{\Delta\Theta}_{r_\Delta}\|_F\) in Lemma~2.

The commissioning run starts with a probe that estimates the pressure-width relation and local dynamic response (Fig.~\ref{fig:lifecycle}). The actuator-deadband-aware Koopman MPC then executes pressure-reference-aware MPC, updates only the local dynamic correction during the short run, and may preserve the best pressure-reference candidate for future warm starts. This retention step is an optional implementation refinement and is disabled in the no-retention arm.

\begin{figure}[!t]
\centering
\includegraphics[width=0.9\linewidth]{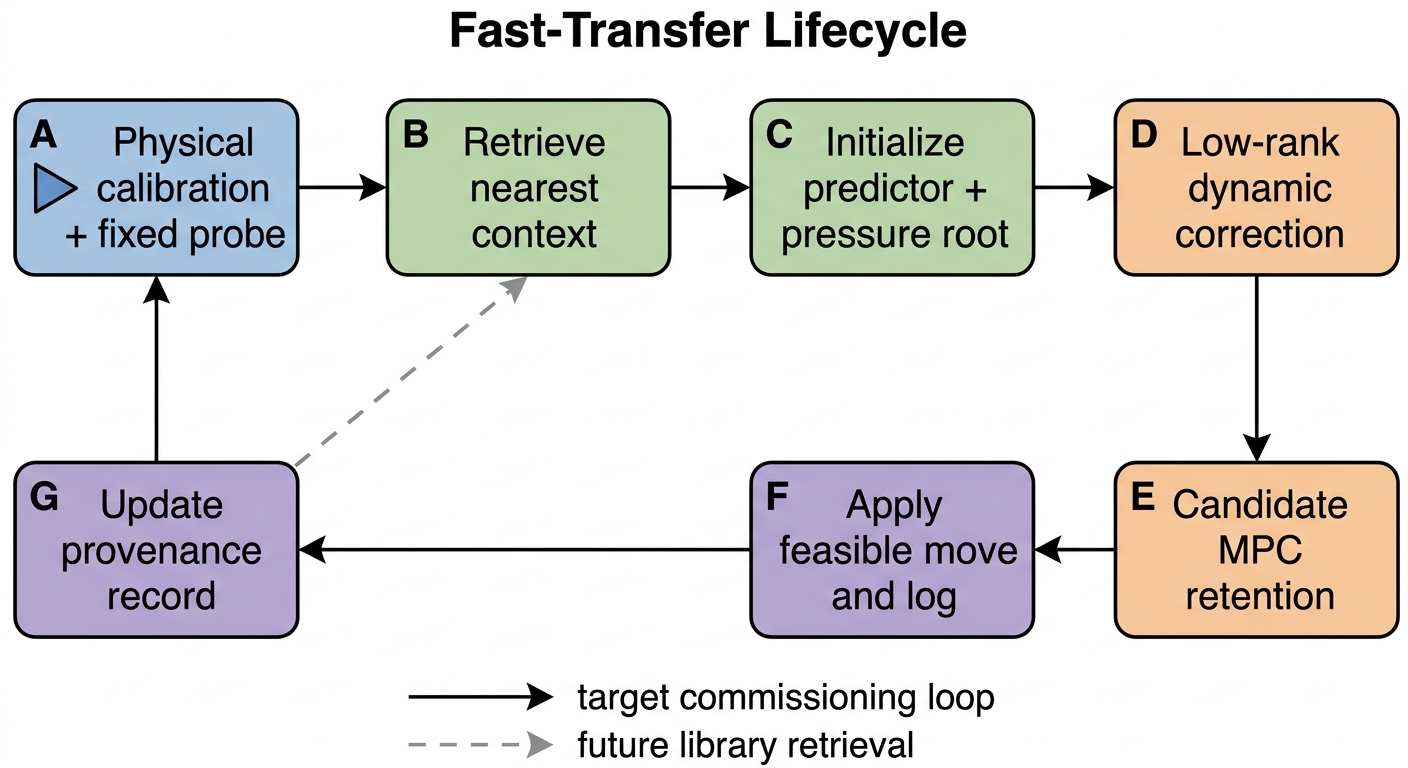}
\caption{Fast-transfer lifecycle from target probing to local correction, feasible control, and memory update.}
\label{fig:lifecycle}
\end{figure}

\section{Validation Protocol}

The main validation uses the low-width recipe \(r_b=1.00\) mm as the deadband case. Root-only pressure control is weak in this regime because part of the commanded pressure move is lost before it reaches the bead. The fixed 16-action probe exposes delivered-pressure efficiency and command hysteresis before feedback execution begins.

The benchmark contains 12 unseen target conditions and 5 seeds, giving 60 paired target-seed cases per controller. Each case uses the same start-pressure schedule, 15 episodes, 10 closed-loop benchmark steps per episode, MPC horizon 12, scaled pressure domain \([300,650]\), command bounds \([-15,15]\) scaled pressure units per step, and reporting band \(\tau_w=0.035\) mm. Controller constants are fixed across arms unless the arm explicitly removes a mechanism: adapter rank 4, retrieval weight \(\alpha=0.6\), 16 source contexts, and 16 noisy target static-calibration samples. These common settings keep the comparison from being confounded by different horizons, starts, calibration budgets, or information access. The target maps are low-width quadratic recipes mapped to the \(0.7\)--\(1.5\) mm operating range with a feasible \(1.00\) mm pressure root. The plant includes first-order bead-width lag, nonlinear command hysteresis, actuator deadband, reduced delivered-pressure gain, inspection noise, calibration-fit error, and slow material drift.

The controller comparison is performed on the calibrated digital-twin surrogate. The physical cell supplies the signal interface, calibration residuals, actuator limits, and anonymized response envelopes; the public benchmark supplies fixed seeds, target definitions, shared constraints, and reproducible controller logs.

The static descriptor is fitted from noisy plant-calibration samples before the surrogate closed-loop run, and the resulting calibration residual is stored with the target context. Static inverse pressure MPC is the primary calibration-aware baseline. Static inverse adaptive PI keeps the same inverse pressure target but corrects it online from measured width error. Probe-fitted ARX-MPC and online RLS-ARX-MPC provide dynamic-identification baselines. The probe-fitted ARX arm uses the same target probe but no retrieved memory, so it separates using probe data from reusing a historical PCM record. The remaining baselines do not receive the retrieved actuator descriptor; the comparison tests whether probe-conditioned memory improves the first commissioning run beyond calibration, probing, and online feedback alone. Mechanism arms remove probe-conditioned memory or freeze online target updating.

All controllers follow the same information-access rule. They may use noisy calibration samples, the fitted pressure root, the fixed probe when the arm uses probing, source context records when the arm uses memory, source-only distance normalizers, and online closed-loop measurements. Hidden target parameters, response-mode labels, and test-target statistics are excluded from controller initialization and are used only for surrogate-plant generation and post-run evaluation. The support package gives the full input schema and source-target split.

All 12 targets include deadband with delivered-pressure gain loss. They place that stress at different pressure roots and transient regimes. Table~\ref{tab:target-set} lists the root locations and dominant dynamic variations; Fig.~\ref{fig:probe-descriptor} shows how the fixed probe extracts the actuator descriptor used by memory lookup, and Fig.~\ref{fig:target-design} visualizes the same target spread.

\begin{table*}[!t]
\centering
\caption{Target-condition set for the \(1.00\) mm stress benchmark. Pressure root is reported in scaled pressure units.}
\label{tab:target-set}
\TableBodyFont
\begin{tabular}{@{}lrl@{\qquad}lrl@{}}
\toprule
Target & Root & Dominant variation & Target & Root & Dominant variation \\
\midrule
T01 & 462 & short lag, weak hysteresis & T07 & 454 & strong reversal hysteresis \\
T02 & 460 & longer lag, mild hysteresis & T08 & 465 & low delivered-pressure gain \\
T03 & 471 & high lag, high hysteresis & T09 & 476 & higher measurement noise, moderate lag \\
T04 & 482 & high static slope, command memory & T10 & 487 & high hysteresis, low delivered gain \\
T05 & 493 & upper-root operation, moderate drift & T11 & 459 & high lag, low drift \\
T06 & 443 & lower-root operation, moderate lag & T12 & 470 & combined lag, hysteresis, and drift \\
\bottomrule
\end{tabular}
\end{table*}

\begin{figure}[!t]
\centering
\includegraphics[width=0.92\linewidth]{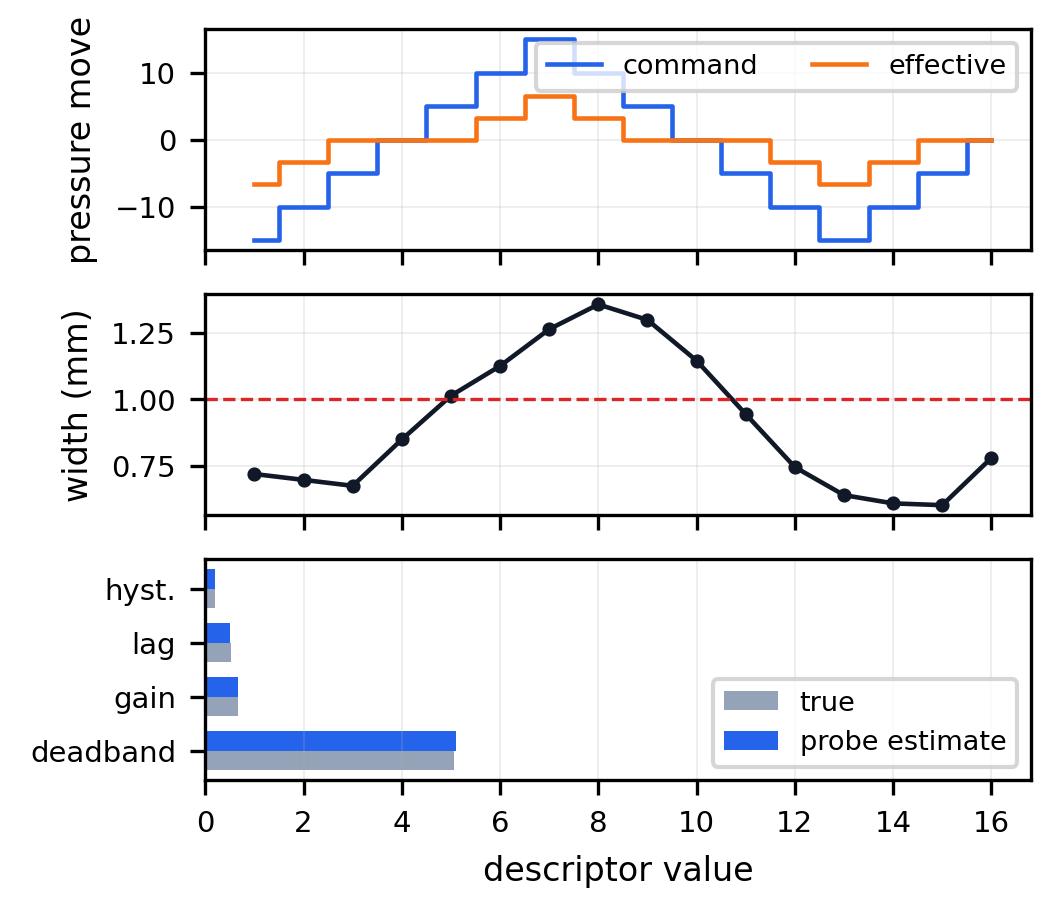}
\caption{Fixed 16-action probe and actuator descriptor construction for the \(1.00\) mm stress benchmark. The probe separates commanded and effective pressure moves and estimates deadband, delivered-pressure gain, lag, and hysteresis used by the retrieved context.}
\label{fig:probe-descriptor}
\end{figure}

\begin{figure}[!t]
\centering
\includegraphics[width=0.92\linewidth]{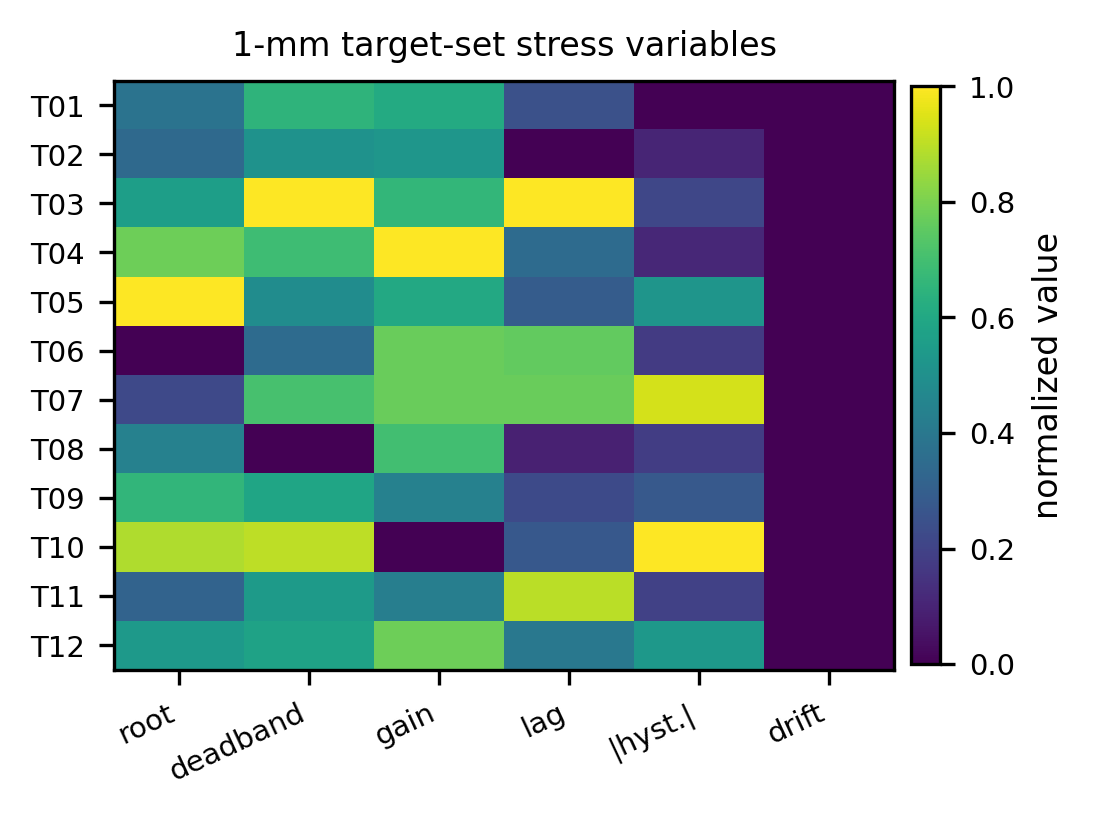}
\caption{Target-condition design for the 12 \(1.00\) mm stress-recipe targets. The heat map summarizes pressure-root location, actuator deadband, delivered-pressure gain, lag, hysteresis, and drift variations used in the paired benchmark.}
\label{fig:target-design}
\end{figure}

\section{Results and Deployment Behavior}

The \(1.00\) mm validation first compares tracking accuracy, then uses paired gains and ablations to separate probe-conditioned memory from ordinary online adaptation. Safety, smoothness, and runtime behavior are reported after the tracking results.

\begin{table*}[!t]
\centering
\caption{Main \(1.00\) mm actuator-deadband benchmark under the common information-access rule. Lower is better except runtime.}
\label{tab:fault-main}
\TableBodyFont
\setlength{\tabcolsep}{6pt}
\begin{tabular}{lrrrrr}
\toprule
Controller & Cases & Tracking MAE & Final-5 MAE & Pred. MAE & p95 ms \\
\midrule
AK-MPC & 60 & 0.0487 & 0.0442 & 0.0154 & 13.7 \\
AK-MPC without PCM & 60 & 0.0655 & 0.0705 & 0.0209 & 13.3 \\
Frozen-PCM AK-MPC & 60 & 0.1158 & 0.1693 & 0.0434 & 13.6 \\
Online RLS-ARX-MPC & 60 & 0.2492 & 0.2316 & 0.0395 & 16.7 \\
Static inverse pressure MPC & 60 & 0.2518 & 0.2662 & 0.2466 & 7.5 \\
Static inverse adaptive PI & 60 & 0.2665 & 0.2823 & 0.2655 & 7.5 \\
Probe-fitted ARX-MPC & 60 & 0.3956 & 0.3900 & 0.5090 & 16.5 \\
\bottomrule
\end{tabular}
\end{table*}

In Table~\ref{tab:fault-main} and Fig.~\ref{fig:fault-main-bars}, AK-MPC gives the lowest error in the deadband regime: \(0.0487\) mm tracking MAE and \(0.0442\) mm final-5 MAE. The comparison has two scales. Relative to classical calibration, feedback, and ARX baselines, the full AK-MPC workflow gives a large architecture-level gap because it combines constrained prediction, online correction, and feasibility filtering. Relative to the no-memory ablation, the isolated PCM gain is smaller but still visible: removing probe-conditioned memory raises tracking error to \(0.0655\) mm. The physical-cell trace in Fig.~\ref{fig:fault-representative} gives the corresponding deployed behavior for the T12/Pattern-111 stress case. Across increasing episode drift, measured actual width remains centered near the \(1.00\) mm recipe target and the learned predictor follows the same trend. The baseline gap follows the delivered-pressure loss seen in the fixed probe.

\begin{figure}[!t]
\centering
\includegraphics[width=0.92\linewidth]{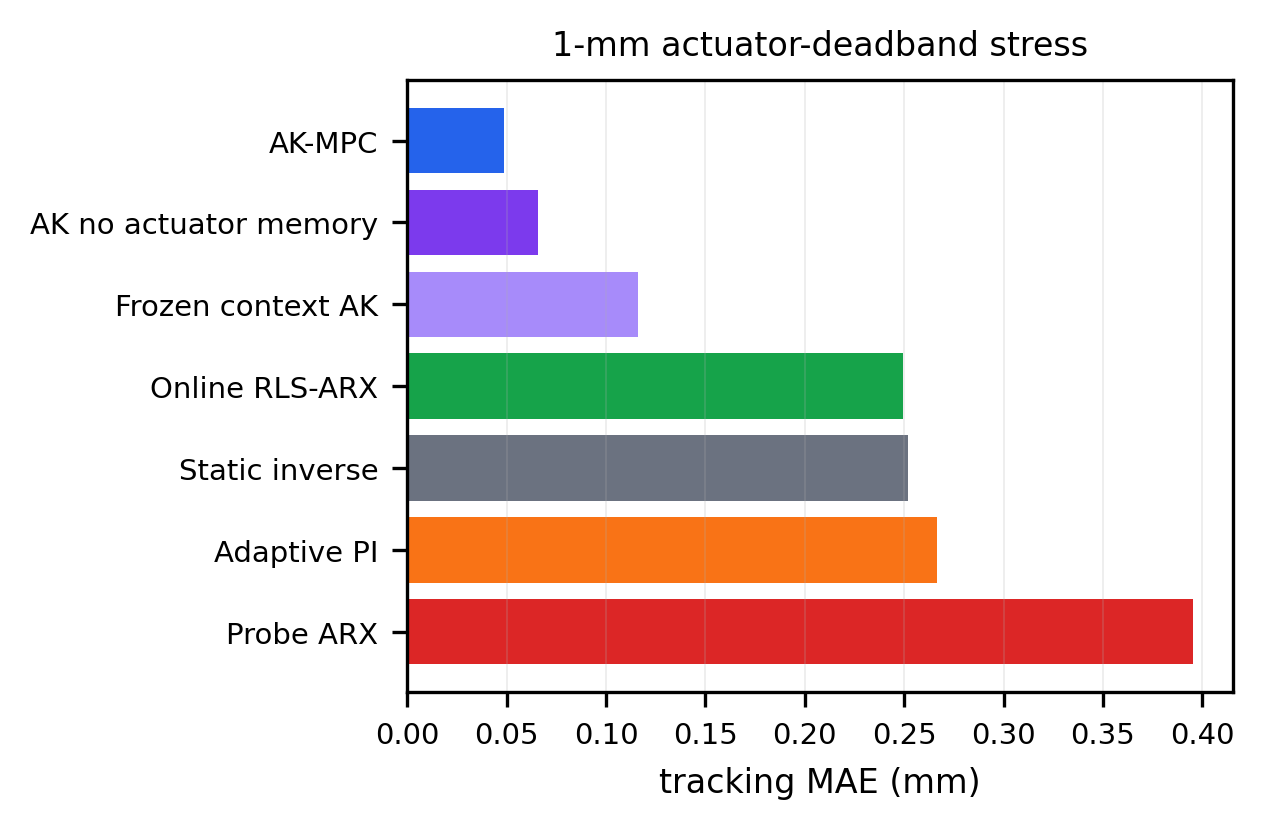}
\caption{Tracking MAE in the main \(1.00\) mm actuator-deadband benchmark. AK-MPC gives the lowest controller error, and removing probe-conditioned memory increases the error.}
\label{fig:fault-main-bars}
\end{figure}

\begin{figure*}[!t]
\centering
\includegraphics[width=0.92\textwidth]{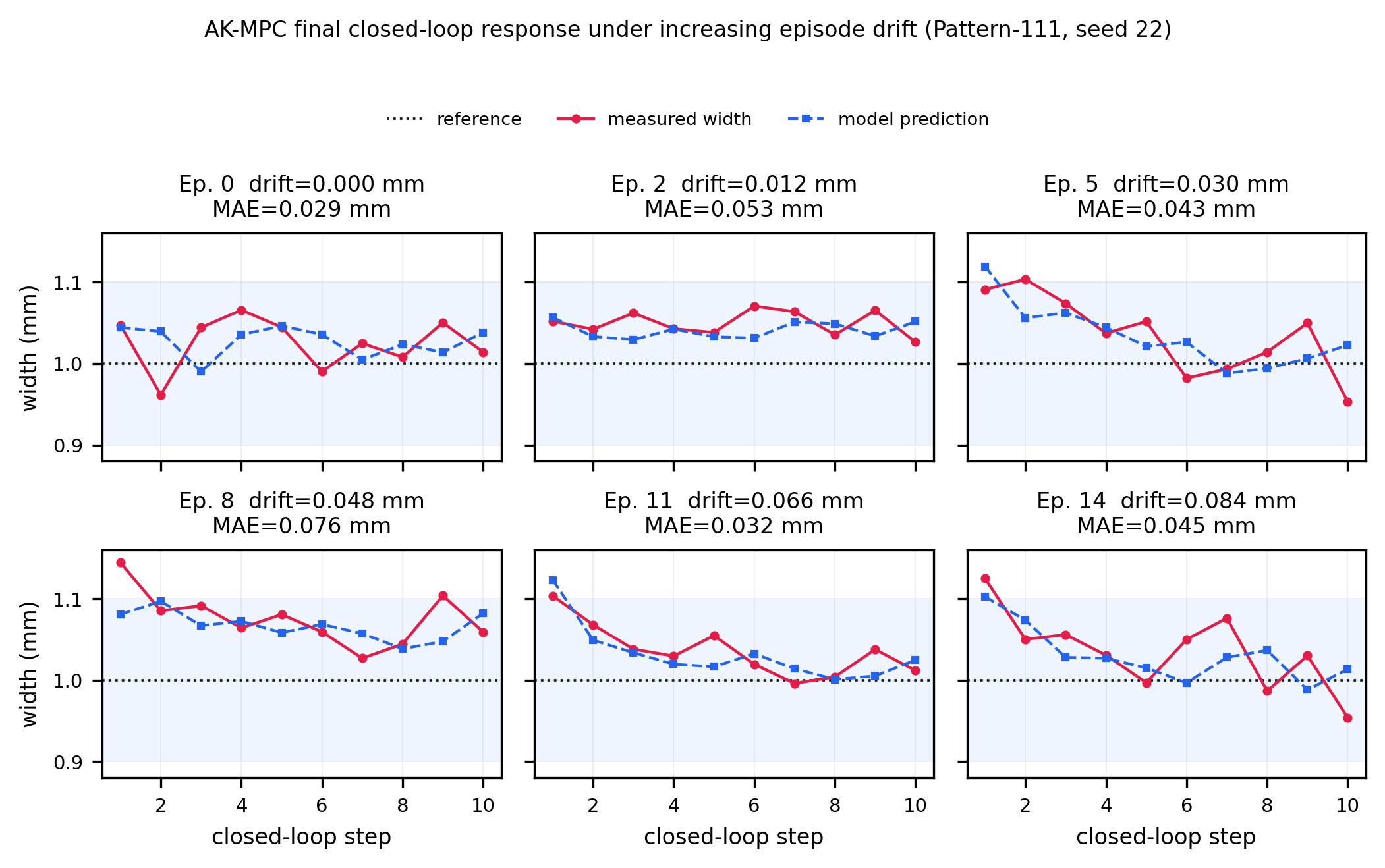}
\caption{Closed-loop control performance of the deployed AK-MPC on the physical sealing cell under increasing episode drift for the combined-dynamics target T12/Pattern-111. Red markers show the measured actual width from the vision camera, blue dashed markers show the one-step model prediction, the dotted line is the \(1.00\) mm reference, and the shaded band marks a \(\pm0.1\) mm control-effect range.}
\label{fig:fault-representative}
\end{figure*}

\begin{table*}[!t]
\centering
\caption{Paired tracking-MAE gains of AK-MPC over each baseline in the \(1.00\) mm actuator-deadband benchmark. Positive values mean AK-MPC has lower tracking error; CI denotes confidence interval.}
\label{tab:fault-paired}
\TableBodyFont
\begin{tabular*}{\textwidth}{@{\extracolsep{\fill}}lrrrr@{}}
\toprule
Baseline & Mean gain & Case-level 95\% CI & Target-clustered 95\% CI & Win rate \\
\midrule
Static inverse pressure MPC & 0.2031 & [0.1527, 0.2529] & [0.1417, 0.2627] & 100\% \\
Static inverse adaptive PI & 0.2178 & [0.1611, 0.2772] & [0.1488, 0.2826] & 100\% \\
Online RLS-ARX-MPC & 0.2005 & [0.1534, 0.2496] & [0.1444, 0.2600] & 100\% \\
Probe-fitted ARX-MPC & 0.3469 & [0.3078, 0.3858] & [0.3141, 0.3801] & 100\% \\
AK-MPC without PCM & 0.0168 & [0.0133, 0.0205] & [0.0135, 0.0203] & 86.7\% \\
Frozen-PCM AK-MPC & 0.0671 & [0.0608, 0.0739] & [0.0592, 0.0749] & 100\% \\
\bottomrule
\end{tabular*}
\end{table*}

Pairing by target and seed leaves the ordering unchanged. AK-MPC wins all 60 paired cases against the static inverse, adaptive PI, RLS-ARX, probe-ARX, and frozen-PCM baselines (Table~\ref{tab:fault-paired}). The target-clustered confidence intervals are positive throughout, and Fig.~\ref{fig:fault-target-gains} shows that the gains are not concentrated in a single target. The smaller but still positive gain over the no-memory ablation shows that probe-conditioned actuator terms in the retrieved context add information beyond the static map and the low-rank online correction.

\begin{figure}[!t]
\centering
\includegraphics[width=0.72\linewidth]{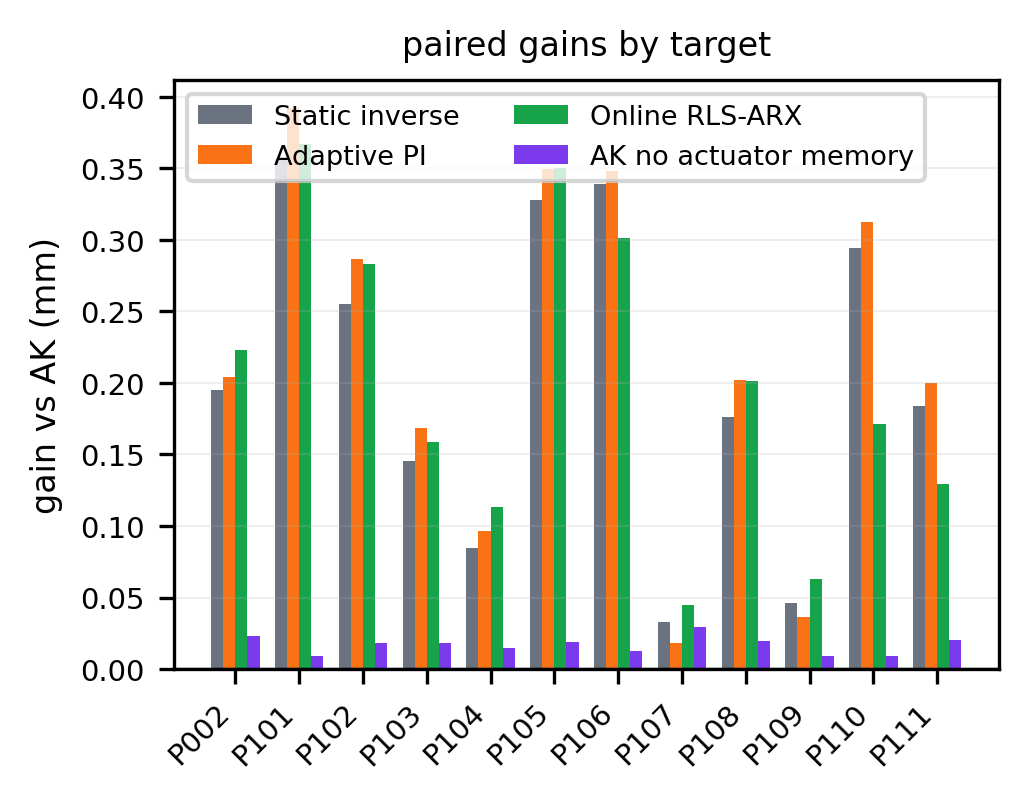}
\caption{Per-target paired gains; positive values indicate lower AK-MPC tracking MAE on matched cases.}
\label{fig:fault-target-gains}
\end{figure}

\begin{table*}[!t]
\centering
\caption{Mechanism comparisons in the \(1.00\) mm actuator-deadband benchmark. Each row removes or replaces one part of the workflow while keeping target-seed cases paired.}
\label{tab:mechanism-diagnostics}
\TableBodyFont
\setlength{\tabcolsep}{3pt}
\begin{tabular*}{\textwidth}{@{\extracolsep{\fill}}>{\raggedright\arraybackslash}p{0.24\textwidth}>{\raggedright\arraybackslash}p{0.30\textwidth}>{\centering\arraybackslash}p{0.08\textwidth}>{\centering\arraybackslash}p{0.20\textwidth}>{\centering\arraybackslash}p{0.08\textwidth}@{}}
\toprule
Mechanism tested & Comparison arm & AK gain & Target CI & Win rate \\
\midrule
Probe-conditioned memory & AK-MPC without PCM & 0.0168 & [0.0135, 0.0203] & 86.7\% \\
Online target update & Frozen-PCM AK-MPC & 0.0671 & [0.0592, 0.0749] & 100\% \\
Static pressure-root only & Static inverse pressure MPC & 0.2031 & [0.1417, 0.2627] & 100\% \\
Static inverse feedback & Static inverse adaptive PI & 0.2178 & [0.1488, 0.2826] & 100\% \\
Generic online ARX dynamics & Online RLS-ARX-MPC & 0.2005 & [0.1444, 0.2600] & 100\% \\
Probe-only ARX dynamics & Probe-fitted ARX-MPC & 0.3469 & [0.3141, 0.3801] & 100\% \\
\bottomrule
\end{tabular*}
\end{table*}

The ablations localize the gain. Removing probe-conditioned memory increases tracking error by \(0.0168\) mm, and freezing the online update costs \(0.0671\) mm (Table~\ref{tab:mechanism-diagnostics}). The mechanism plot in Fig.~\ref{fig:fault-mechanism} shows the same ordering. Static pressure-root control, adaptive static feedback, and generic ARX/RLS dynamics are weaker here because the short-probe delivered-pressure loss is absent from their initialization.

\begin{figure}[!t]
\centering
\includegraphics[width=0.80\linewidth]{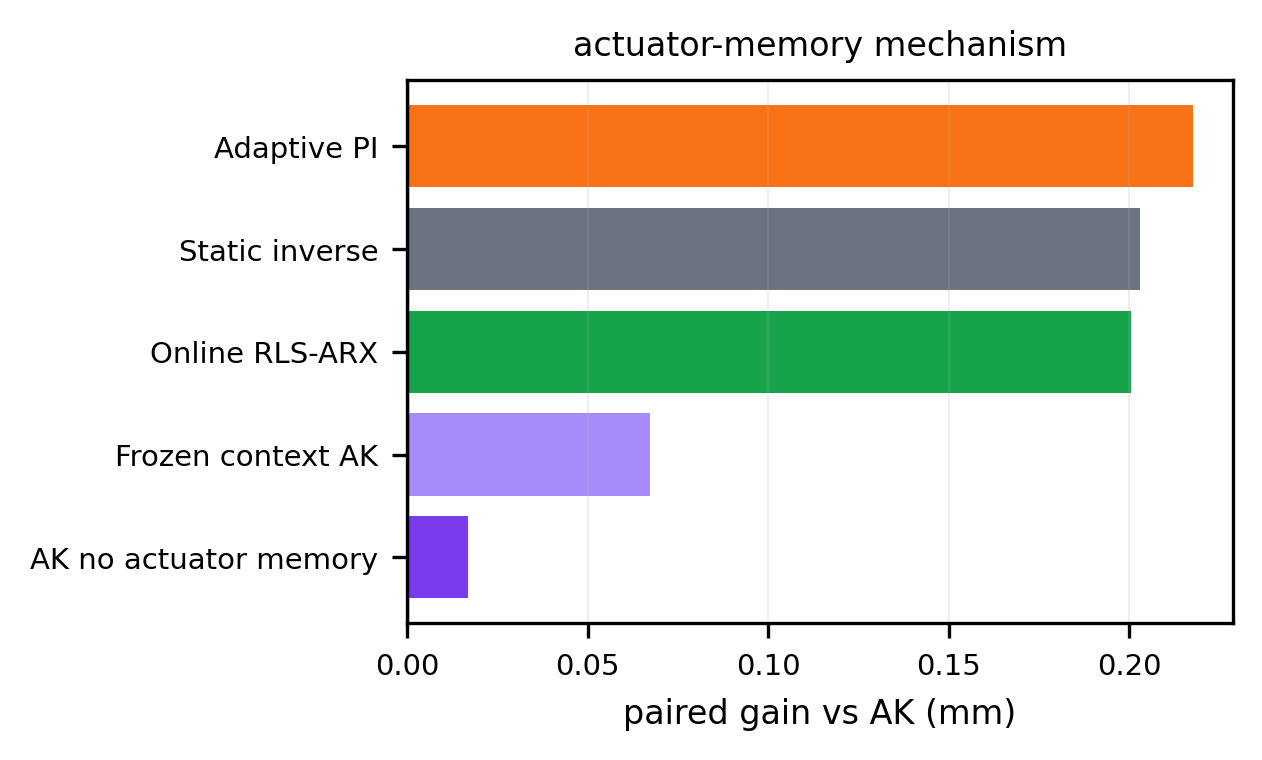}
\caption{Mechanism ablation results for actuator-deadband validation. Positive paired gains correspond to lower AK-MPC tracking MAE after removing or replacing the listed mechanism.}
\label{fig:fault-mechanism}
\end{figure}

The main tables use only information available at deployment: physical calibration summaries, source contexts, the fixed target probe, online closed-loop observations, and recipe-level bounds. Hidden target coefficients, response-mode labels, test-target summary statistics, and future closed-loop observations are excluded from controller initialization and retrieval.

\begin{table*}[!t]
\centering
\caption{Practical scale, safety, smoothness, and runtime results in the \(1.00\) mm actuator-deadband benchmark. The out-of-band rate uses \(\tau_w=0.035\) mm as a reporting band; TV denotes total variation.}
\label{tab:fault-safety}
\TableBodyFont
\setlength{\tabcolsep}{3pt}
\begin{tabular*}{\textwidth}{@{\extracolsep{\fill}}>{\raggedright\arraybackslash}p{0.27\textwidth}rrrrrrr@{}}
\toprule
Controller & Press. viol. & Input viol. & Out-of-band & Input sat. & Press. TV & Act. TV & p95 ms \\
\midrule
AK-MPC & 0 & 0 & 52.1\% & 0.0\% & 540.6 & 32.5 & 13.7 \\
AK-MPC without PCM & 0 & 0 & 73.1\% & 0.8\% & 623.3 & 59.5 & 13.3 \\
Frozen-PCM AK-MPC & 0 & 0 & 88.3\% & 0.0\% & 506.2 & 27.9 & 13.6 \\
Online RLS-ARX-MPC & 0 & 0 & 92.3\% & 3.6\% & 694.2 & 121.2 & 16.7 \\
Static inverse pressure MPC & 0 & 0 & 86.2\% & 0.0\% & 522.4 & 45.9 & 7.5 \\
Static inverse adaptive PI & 0 & 0 & 84.3\% & 0.3\% & 597.3 & 97.9 & 7.5 \\
Probe-fitted ARX-MPC & 0 & 0 & 97.8\% & 27.1\% & 1331.2 & 180.4 & 16.5 \\
\bottomrule
\end{tabular*}
\end{table*}

No controller violates pressure or input bounds in the logged execution. Table~\ref{tab:fault-safety} shows that AK-MPC has the lowest out-of-band rate among controllers that use only deployment-time information and no input saturation in this run. Fig.~\ref{fig:fault-feasibility} provides the corresponding constraint/runtime heat map. The probe-conditioned-memory controller also uses less action total variation than static inverse adaptive PI and online RLS-ARX-MPC, consistent with its explicit model of deadband and delivered-pressure gain. Runtime was measured on a Windows 11 workstation with an Intel Core Ultra 5 125H central processing unit (CPU), Python 3.8.10, NumPy 1.24.4, and CPU-only PyTorch 2.4.1; the reported p95 values cover candidate rollout, prediction, scoring, and filtering, and exclude plotting and comma-separated-value (CSV) export.

The error scale differs from a root-only benchmark. The physical calibration RMSE remains \(0.010\) mm, while the \(1.00\) mm run adds a target-fit residual and an actuator-delivery error that static inverse feedback cannot infer from the pressure root alone. AK-MPC reduces prediction MAE to \(0.0154\) mm and keeps the lowest tracking MAE, whereas the static and ARX baselines either miss the delivered-pressure loss or identify it too late within the short commissioning budget.

\begin{figure}[!t]
\centering
\includegraphics[width=0.78\linewidth]{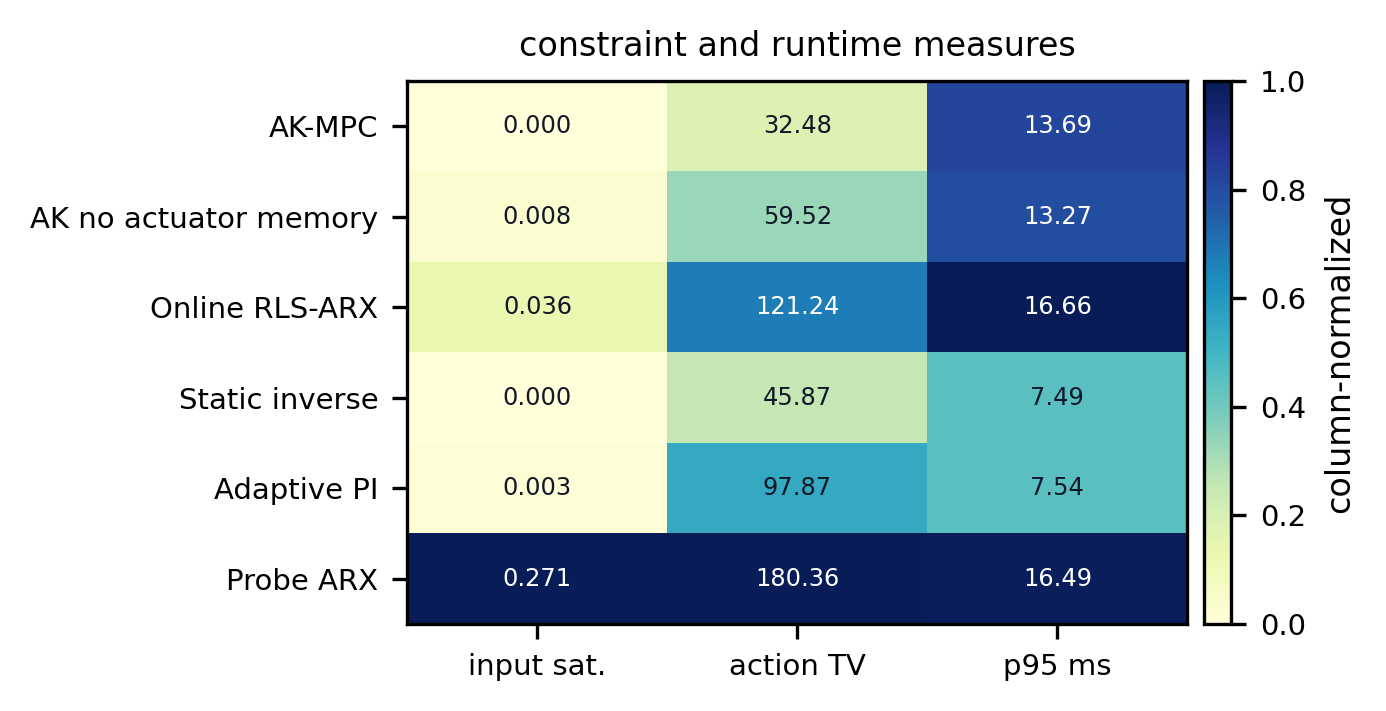}
\caption{Constraint and runtime heat map for the \(1.00\) mm actuator-deadband benchmark, with values colored by column-normalized scale.}
\label{fig:fault-feasibility}
\end{figure}

\subsection{Auxiliary Runs}

Earlier sensitivity, retrieval-stability, and cross-recipe runs are retained in the support material as auxiliary validation runs. These auxiliary files vary the probe length over 8, 16, and 24 moves, the adapter rank over 2, 4, and 8, the retrieval weight \(\alpha\) over 0.3, 0.6, and 0.9, and the source-library size over 4, 8, 16, and 32 records under the same 12-target, 5-seed, horizon-12 organization. The main benchmark reported above uses the same target, seed, start-pressure, probe, horizon, pressure-bound, and input-bound schedule for every arm. The exported files contain 420 controller-cases, 63000 closed-loop surrogate step rows, zero non-finite case metrics, zero pressure violations, and zero input violations.

Those auxiliary runs bound the scope of the \(1.00\) mm result by repeating the run organization under related sensitivity, retrieval, and recipe settings.

Detailed figure/table provenance is kept in the support folder, leaving the main article focused on the control mechanism and the controller comparison.

\section{Discussion}

The results are strongest in the regime for which the benchmark was built: a narrow bead recipe whose calibrated pressure root remains valid but whose delivered pressure is history dependent. In that setting, the probe gives the controller an initial estimate of actuator loss, while the online correction removes the remaining target-fit residual during the episode sequence. The no-memory and frozen-memory rows separate these two roles: the stored actuator terms improve the first run, and target updating keeps the predictor aligned as drift accumulates. This distinction is important for interpreting the numbers: the large gap to classical baselines is an AK-MPC workflow effect, while the no-memory row quantifies the additional contribution of retrieved PCM.

The same mechanism also shows where these gains disappear. If a recipe behaves like a root-only static map, static inverse feedback may be sufficient. If the probe does not excite the relevant deadband direction, or if the source library lacks a nearby delivery mode, retrieval can only provide a weak prior and the fallback filter becomes more important. The current probe/no-memory midpoint is the probe-fitted ARX controller; a hand-designed deadband-compensated PI controller would be a useful additional midpoint for a larger physical campaign. The method is aimed at low-width recipes where delivered-pressure loss is visible before feedback execution and has appeared in previous operating records.

The broad paired comparison remains a calibrated digital-twin benchmark, while Fig.~\ref{fig:fault-representative} provides a physical-cell closed-loop trace for the T12/Pattern-111 stress case. This trace should be read as a deployment sanity check rather than as a 60-case physical benchmark. It confirms that the retrieved probe-conditioned memory and online correction handle true process disturbances, including pressure loss and material drift, on the measured signal path.

\FloatBarrier

\section{Conclusion}

This study developed probe-conditioned memory for actuator-deadband-aware Koopman MPC in low-width sealing recipes. The \(1.00\) mm benchmark isolates the failure of root-only pressure control: the static pressure setpoint is calibrated, but delivered pressure depends on recent actuator motion. AK-MPC gives the lowest tracking error across the paired digital-twin comparisons, and the ablations show that both the stored actuator terms and the online target update contribute to the observed error reduction. The physical-cell trace supports the same mechanism on one industrial stress case, but the statistical benchmark remains the calibrated surrogate study. The practical message is narrower than a general transfer-learning claim: historical runs help most when they carry pressure-delivery information that a short target probe can verify and refine.

\appendices
\section{Implementation Details}

The main actuator-deadband results are generated by the \(1.00\) mm benchmark runner. The run contains 12 targets, 5 seeds, 15 episodes, 10 steps per episode, horizon 12, and 7 controller arms, giving 420 controller-cases and 63000 closed-loop surrogate step rows. The package contains the corresponding case summaries, paired comparisons, safety/runtime summaries, generated figure sources, and reproducibility files; earlier sensitivity and cross-recipe outputs remain in the support material as auxiliary validation runs. Source-library growth is handled by nearest-neighbor retrieval over the stored descriptor tuple in this implementation; large-scale record pruning or lifecycle-based retirement is left to plant deployment policy rather than claimed as a solved contribution here.

\FloatBarrier

\section*{Data and Code Availability}

Raw physical-cell and production records are not publicly released because they contain confidential product-recipe, material, and process-operation information. The released package contains anonymized physical calibration summaries, probe-response summaries, actuator and measurement bounds, target-condition definitions, digital-twin generation scripts, controller configurations, surrogate step histories, paired-comparison tables, information-access files, and figure/table provenance files. These artifacts reproduce the reported digital-twin controller comparisons without exposing proprietary industrial records. The main benchmark outputs are stored in the package support folder and matching project data directory; older sensitivity and cross-recipe outputs are retained as auxiliary results.

\end{document}